\documentclass[11pt,a4paper]{article}   

\addtolength{\oddsidemargin}{-1cm}
\addtolength{\topmargin}{-1cm}
\addtolength{\textwidth}{2cm}
\addtolength{\textheight}{2cm}
\usepackage{graphicx}

\usepackage{etoolbox} 
\usepackage[margin=12pt, font = small, labelfont =bf]{caption}
\usepackage{bm} 

\usepackage[english]{babel}
\usepackage{amsmath, amsthm,amssymb, multirow}
\usepackage{mathtools}
\usepackage{natbib}
\usepackage{enumerate}

\usepackage{tikz, ifthen}
\usetikzlibrary {shapes, trees, arrows, patterns, fadings}
\usetikzlibrary {positioning}
\usetikzlibrary{fit, backgrounds, calc, matrix}

\newtheorem{proposition}{Proposition}
\DeclareSymbolFont{bbold}{U}{bbold}{m}{n}
\DeclareSymbolFontAlphabet{\mathbbold}{bbold}
\DeclareBoldMathCommand{\bn}{n}
\DeclareBoldMathCommand{\bS}{S}
\DeclareBoldMathCommand{\bZ}{Z}
\DeclareBoldMathCommand{\bK}{K}
\DeclareBoldMathCommand{\bU}{U}

\newcommand{\ind}[1]{\ensuremath{\mathbbold{1}_{\left\{#1\right\}}}}

\DeclarePairedDelimiterX\cdp[2]{\lparen}{\rparen}{ #1\,\delimsize\vert\,#2}
\DeclarePairedDelimiterX\cdb[2]{\lbrace}{\rbrace}{ #1\,\delimsize\vert\,#2}
\newcommand{\cd}{\, |\, }
\newcommand\indep{\protect\mathpalette{\protect\independenT}{\perp}} 
\newcommand{\independenT}[2]{\mathrel{\rlap{$#1#2$}\mkern4mu{#1#2}}} 

\DeclareMathOperator{\Exp}{\mathbb{E}}
\DeclareMathOperator{\Var}{\mathbb{V}}
\DeclareMathOperator{\Pro}{\mathbb{P}}
\newcommand{\B}{\mathcal{B}}
\newcommand{\C}{\mathcal{C}}

\newcommand{\bin}[2]{\mathrm{bin}\left(#1, #2\right)}

\definecolor{oxfordblue}{RGB}{0,33,71}
\colorlet{backgroundcolor}{oxfordblue!10}
\pgfdeclarelayer{background}
\pgfdeclarelayer{layer1}
\pgfdeclarelayer{foreground}

\pgfsetlayers{background,layer1,main,foreground}
\newcommand{\myunit}{1em}

\newcommand{\convexpath}[2]{
[   
    create hullnodes/.code={
        \global\edef\namelist{#1}
        \foreach [count=\counter] \nodename in \namelist {
            \global\edef\numberofnodes{\counter}
            \node at (\nodename) [draw=none,name=hullnode\counter] {};
        }
        \node at (hullnode\numberofnodes) [name=hullnode0,draw=none] {};
        \pgfmathtruncatemacro\lastnumber{\numberofnodes+1}
        \node at (hullnode1) [name=hullnode\lastnumber,draw=none] {};
    },
    create hullnodes
]
($(hullnode1)!#2!-90:(hullnode0)$)
\foreach [
    evaluate=\currentnode as \previousnode using \currentnode-1,
    evaluate=\currentnode as \nextnode using \currentnode+1
    ] \currentnode in {1,...,\numberofnodes} {
-- ($(hullnode\currentnode)!#2!-90:(hullnode\previousnode)$)
  let \p1 = ($(hullnode\currentnode)!#2!-90:(hullnode\previousnode) - (hullnode\currentnode)$),
    \n1 = {atan2(\x1,\y1)},
    \p2 = ($(hullnode\currentnode)!#2!90:(hullnode\nextnode) - (hullnode\currentnode)$),
    \n2 = {atan2(\x2,\y2)},
    \n{delta} = {-Mod(\n1-\n2,360)}
  in 
    {arc [start angle=\n1, delta angle=\n{delta}, radius=#2]}
}
-- cycle
}

\newcommand{\clique}[4][2]{
  \let\mymatrixcontent\empty
  \foreach \c in {#4}{%
    \foreach[count = \j] \b in \c{
      \pgfmathparse{Mod(\j,#1)==0? 1:0}
      
      \ifthenelse{\pgfmathresult=1}{\expandafter\gappto\expandafter\mymatrixcontent\expandafter{\b\\}}
      {\expandafter\gappto\expandafter\mymatrixcontent\expandafter{\b\&}}
    }
    \expandafter\gappto\expandafter\mymatrixcontent\expandafter{[2]}
  }
  \matrix (#2) [matrix of math nodes, nodes in empty cells, ampersand replacement=\&, #3] {
    \mymatrixcontent
  };
}

\newcommand{\Bclique}[5][2]{
  \let\mymatrixcontent\empty
  \foreach \c in {#4}{%
    \foreach[count = \j] \b in \c{
      \pgfmathparse{Mod(\j,#1)==0? 1:0}
      \ifthenelse{\pgfmathresult=1}{\expandafter\gappto\expandafter\mymatrixcontent\expandafter{\b\\}}
      {\expandafter\gappto\expandafter\mymatrixcontent\expandafter{\b\&}}
    }
    \expandafter\gappto\expandafter\mymatrixcontent\expandafter{[2]}
  }
  \expandafter\gappto\expandafter\mymatrixcontent\expandafter{ 
    #5 \\
  }

  \matrix (#2) [matrix of math nodes, nodes in empty cells,  ampersand replacement=\&, #3] {
    \mymatrixcontent
  };
}

\tikzstyle{Cstyle}=[rounded corners = 0.4ex,draw,  font = \scriptsize, inner sep = 0.4ex,
nodes={text height=1.5ex,text depth=.25ex, inner sep = 0ex, rounded corners=0}, column sep = 0.5ex]

\newcommand{\hugin}{\texttt{HUGIN} }
\begin{document}

\title{Computational aspects of DNA mixture analysis\\
\vspace{10pt}
\small{Exact inference using auxiliary variables in a Bayesian network}}

\author{Therese Graversen\thanks{Corresponding author: Therese Graversen,
              Department of Statistics, University of Oxford,
              1 South Parks Road, Oxford OX1 3TG, United Kingdom,
              email: graversen@stats.ox.ac.uk. }\\University of Oxford       \and
        Steffen  Lauritzen\\ University of Oxford
}

\maketitle

\begin{abstract}
  Statistical analysis of DNA mixtures is known to pose computational
  challenges due to the enormous state space of possible DNA
  profiles. We propose a Bayesian network representation for
  genotypes, allowing
  computations to be performed locally involving only a few alleles at
  each step.  In addition, we describe a general method for computing the
  expectation of a product of discrete random variables using
  auxiliary variables and probability propagation in a Bayesian
  network, which in combination with the genotype network allows
  efficient computation of the likelihood function and various other
  quantities relevant to the inference. Lastly, we introduce a set of
  diagnostic tools for assessing the adequacy of the model for describing a
  particular dataset.\\
  
  \textbf{Keywords:} Bayesian network; genotype
    representation; junction tree; model diagnostics;
    prequential monitor; triangulation.
\end{abstract}

\section{Introduction}

In this paper we demonstrate methods for exact computation in
statistical analysis of DNA mixtures, where the need for summation
over the space of possible DNA profiles for unknown contributors
traditionally has involved some degree of approximation
\citep{article:PENDULUM,Tvedebrink2010,puch2012} and has only been made
for two or three unknown contributors \citep{Cowell2011202}. In
contrast, using the methodology presented here and the corresponding
implementation by \cite{graversen:package:13} in the
\texttt{R}-package $\texttt{DNAmixtures}$, \cite{cowell:etal:13} were
able to perform exact evaluation and subsequent numerical maximisation
of the likelihood function for up to six unknown contributors.

The present paper develops a suite of tools for inference in the
statistical model described in \cite{cowell:etal:13} enabling
evaluation of the likelihood function, computation of posterior
probability of genotypes given a set of observed peak heights, and
assessment of model adequacy. We exploit introduction of auxiliary
variables combined with an efficient representation of the genotypes
as a Bayesian network. The implementation in
 \texttt{DNAmixtures} 
 interfaces the HUGIN API \citep{hugin:api:manual:2013} via
\texttt{RHugin} \citep{manual:RHugin}.

The plan of the paper is as follows: Section~\ref{sec:model} briefly
describes the relevant model for DNA mixture analysis and the
computational methods are detailed in \ Section~\ref{sec:bayesnet}. In
Section~\ref{sec:mixtureanal} we show how the methodology can be
extended to calculate various quantities of interest; in particular we
develop  methods for assessing the adequacy of the model.

\section{A statistical model for mixed traces of DNA}
\label{sec:model}
In statistical analysis of DNA mixtures it is of interest to draw
inference about individual DNA profiles in a mixed trace of DNA. The observations consists of a set of peak heights in an electropherogram (EPG) produced after a che\-mi\-cal duplication process known as a polymerase chain reaction (PCR). 
Figure~\ref{fig:epg} represents a schematic illustration of part of an EPG.
\begin{figure}
  \centering
  \includegraphics{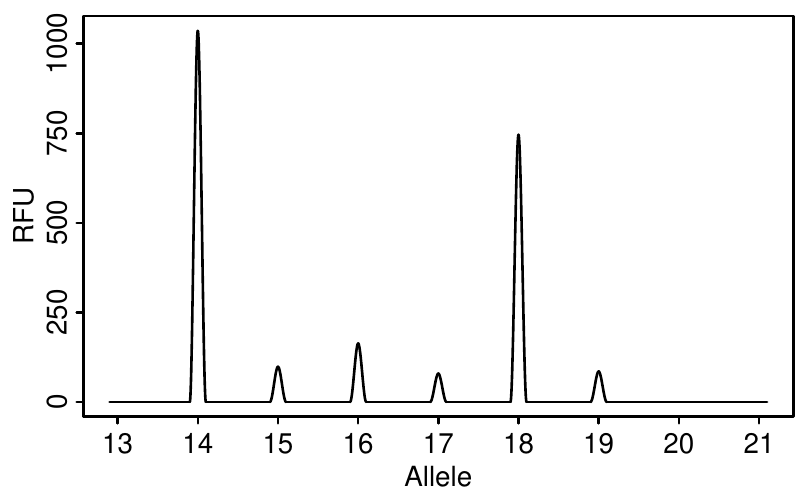}    
  \caption{Stylized electropherogram exhibiting peaks for the alleles at one particular marker.}
  \label{fig:epg}
\end{figure}

The DNA sequences  at Short Tandem Repeat (STR) \emph{mar\-kers}
are characterised by a motif of base pairs repeated a number of times,
so that a specific repeat number corresponding to an \emph{allele} and
a peak in the EPG typically indicates presence of the corresponding
allele.

A pair of alleles is called a \emph{genotype}, and the genotypes
across a set of markers constitute the \emph{DNA profile}.  The
markers used for forensic identification are typically located at
different chromosome-pairs or, if not, on well separated locations,
rendering it reasonable to assume independence of genotypes across
markers.

The observed EPG is prone to artefacts known as \emph{stutter} and
\emph{dropout}: Stutter refers to the phenomenon that some of the DNA
alleles may lose a repeat motif during the PCR process and thus
contribute to a peak at a lower repeat number. If there are allelic
types $\{1, \ldots, A\}$ we thus assume that any allele $a$ receives
stutter from the amplification of allele $a+1$. Dropout refers to the
fact that peak heights occasionally are too small for the peak to be
registered.

We distinguish between known and unknown contributors to the sample,
depending on whether their DNA profile is considered known or not.
The computational complexity of the problem is directly associated
with the huge number of
possible allocations of genotypes to the unknown contributors.

The genotypes of a DNA profile are assumed independent across markers
so we briefly describe the model for one marker only, following
\cite{cowell:etal:13}.

\subsection{Model for the genotypes of unknown contributors}

We assume that the alleles of an unknown person are sampled from a
reference population in Hardy--Weinberg equilibrium so that the two
alleles can be considered sampled independently.  Denote by $n_{ia}$
the number of alleles of type $a$ for contributor $i$. A genotype
$(n_{i1}, \ldots, n_{iA})$ for an unknown contributor  follows a
multinomial distribution with  allele frequencies
$(q_{1}, \ldots, q_{A})$ and $\sum_a n_{ia}=2$.  Unknown contributors are assumed unrelated
so their genotypes are independent.

\subsection{Peak height distribution for fixed genotypes}

Analysing DNA that contains alleles of type $a$ results in a peak at
position $a$, and possibly also a smaller peak at position $a-1$ due
to stutter during the PCR process; thus the height of the peak
$H_a\geq 0$ for allele $a$ depends on the presence of alleles of both
type $a$ and type $a+1$.  Peaks of height $H_a$ below a chosen
threshold $C$ are not registered and so the observed peak heights are
$Z_a = H_a \ind{H_a \ge C}$.

For given genotypes of the contributors we assume that 
the peak height $H_a$ at allelic type $a$ is gamma
distributed 
with  shape and scale parameters depending on the numbers
$n_{ia}$, $n_{i,a+1}$ of alleles of type $a$ and $a+1$ that each
unknown contributor $i$ possesses, as well as  a
set of model parameters, $\psi = (\rho, \eta, \xi, \phi)$; more precisely we assume that 
$H_a \sim \Gamma(\lambda_a, \eta) $,
where
\begin{equation}\label{eq:gammadist}
\lambda_a=
\rho\sum_{i = 1}^k \left\{(1-\xi)n_{ia} + \xi n_{i,a+1}\right\}\phi_i.
\end{equation} 
Here, and in the following, we have let $n_{i,A+1}=0$; the parameter $\xi$ is the \emph{mean stutter percentage}, $\rho$ is related to the general peak variability, $\phi_i$ denotes the fraction of DNA from individual $i$, and $\eta$ is the scale.  If $\lambda_a=0$ the gamma distribution $\Gamma(0,\eta)$ is considered degenerate at 0.
\subsection{Likelihood function}
The likelihood function is determined by the distribution of the observed peak heights. 
The observed peak heights are independent across markers $m=1, \ldots, M$, and thus the likelihood
function factorises accordingly. Using this fact in combination with \eqref{eq:gammadist} we find 
\begin{align}
  \ell(\psi)
  & = \prod_{m = 1}^Mf_\psi(Z_1^m, \ldots, Z_{A_m}^m) \nonumber \\
  & = \prod_{m = 1}^M \Exp\left\{f_\psi\cdp*{Z_1^m, \ldots, Z_{A_m}^m}{\bn}\right\} \nonumber  \\
  &= \prod_{m=1}^M\Exp \bigg\{\prod_{a=1}^{A_m} f_\psi\cdp{z^m_a}{\bn_a, \bn_{a+1}}\bigg\}, \label{eq:likfunction}
  \end{align}
  where the expectation is taken with respect to the distribution of genotypes of the unknown contributors.
Here and in the following $\bn$ denotes the full set of genotypes for all individuals and $\bn_a$ the vector $\bn_a= (n_{ia}, i\in I)$ of allele-counts of type $a$. The expectation in \eqref{eq:likfunction} involves summation over all combinations of possible genotypes of  potential contributors. 
There are $\{A_m(A_m+1)/2\}^k$ possible combinations of genotypes at
a marker, and thus there are this many terms in the sum, each being a product of $A_m$ factors. Direct computation is  typically infeasible when there are many alleles and many unknown contributors. We attack this computational problem by appropriate use of Bayesian network techniques, as detailed in Section~\ref{sec:bayesnet} below.

Note that our methodology can be used directly with other choices of distribution
for the peak heights, provided that the distribution of the peak
height for allele $a$ depends only on the genotypes through the number
of alleles of types $a$ and $a+1$.

\section{Computational methods}\label{sec:bayesnet}

As a consequence of \eqref{eq:likfunction}, and for other purposes,
the computational task in DNA mixture analysis involves repeated
computation of the expectation $\Exp\{h(X)\}$
of non-negative functions $h$ of a set
of discrete variables $X = \{X_v\}_{v \in V}$.
We describe our computational approach in thid general setting
before returning to the DNA mixture model in Section~\ref{sec:gtnet},
where we give a network representation of a genotype for an unknown
contributor to the trace.

\subsection{Computation by auxiliary variables}\label{sec:dummy}

Let $X = \{X_v\}_{v\in V}$ be a collection of discrete variables with
a distribution represented by a Bayesian network.  For $B \subseteq
V$, we denote by $X_B$ the collection of variables $\{X_v\}_{v\in B}$.

Let $h$ be a non-negative function which can be written on the form
\[
h(x) = \prod_{B \in \B}h_B(x_B),
\]
for some set $\B$ of subsets of $V$ and 
real-valued, non-negative functions $h_B$.

For each $B\in\B$ we introduce binary random variables $Y^B\in \{0,1\}$
which are conditionally independent given the network and have conditional distributions 
\begin{equation}\label{eq:dummydef}
\Pro \cdp*{Y^B = 1}{X = x} =
\Pro \cdp*{Y^B = 1}{X_B = x_B} 
= h_B(x_B)/k^B.
\end{equation}
Here, the constant $k^B$ is chosen such that $h_B(x_B)/k^B \in [0, 1]$
over all states $x_B$ and so \eqref{eq:dummydef} defines a valid probability distribution. A simple choice
would be $k^B = \max_{x_B}{h_B(x_B)}$, i.e.\ the largest value that
$h_B$ attains over the state space of $X_B$. 
We use the state space $\{0,1\}$ for auxiliary variables, but note that
this choice is unimportant for the method itself.

The desired expectation $\Exp\{\prod_{B\in\B}h_B(X_B)\}$ can now be
expressed as the probability of a specific configuration of the binary variables
introduced. As Proposition~\ref{the:exptoprob} reveals, this is also the
case for the expectation of a product of any subset of the variables $h_B(X_B)$.

\begin{proposition}\label{the:exptoprob}
  For all $\B' \subseteq \B$ it holds that
  \[
  \Exp\bigg\{\prod_{B\in\B'}h_B(X_B)\bigg\} = \Pro\bigg(\bigcap_{B\in\B'}\{Y^B=1\}\bigg)\prod_{B\in\B'}k_B. 
  \]    
\end{proposition}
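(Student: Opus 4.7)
The plan is to compute the joint probability $\Pro\bigl(\bigcap_{B\in\B'}\{Y^B=1\}\bigr)$ directly by conditioning on $X$, exploiting the two properties assigned to the auxiliary variables in \eqref{eq:dummydef}: (i) given $X$, the variables $\{Y^B\}_{B\in\B}$ are mutually conditionally independent, and (ii) each $Y^B$ depends on $X$ only through $X_B$ with the prescribed success probability $h_B(x_B)/k^B$. Since the identity I want is symmetric in the index set $\B'$, nothing is lost by working with an arbitrary subset from the outset.

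First I would write
\[
\Pro\bigg(\bigcap_{B\in\B'}\{Y^B=1\}\bigg)
= \sum_{x}\Pro(X=x)\,\Pro\bigg(\bigcap_{B\in\B'}\{Y^B=1\}\,\bigg|\,X=x\bigg),
\]
then invoke conditional independence to factor the conditional probability as $\prod_{B\in\B'}\Pro(Y^B=1\mid X=x)$, and substitute \eqref{eq:dummydef} to replace each factor with $h_B(x_B)/k^B$. Multiplying both sides by $\prod_{B\in\B'}k^B$ turns the sum into $\sum_x \Pro(X=x)\prod_{B\in\B'}h_B(x_B)$, which is exactly $\Exp\{\prod_{B\in\B'}h_B(X_B)\}$ by definition of expectation, completing the proof.

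There is no real obstacle here: the argument is essentially bookkeeping once the conditional independence of the $Y^B$'s given $X$ is used. The only point that deserves a sentence of care is that conditional independence given the full vector $X$ (as stipulated by the Bayesian network construction) is strictly stronger than what the second equality in \eqref{eq:dummydef} states about each $Y^B$ individually; it is the former property that is used when the joint conditional probability is factored. Because the $k^B$ are deterministic constants, pulling $\prod_{B\in\B'}k^B$ outside the summation and the expectation is immediate, and the identity follows.
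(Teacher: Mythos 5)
Your proof is correct and is essentially the paper's own argument read in the opposite direction: both rest on the same two facts, namely the conditional independence of the $Y^B$ given $X$ and the substitution $\Pro\cdp*{Y^B=1}{X}=h_B(X_B)/k^B$ from \eqref{eq:dummydef}, combined via the tower property. The only cosmetic difference is that you write the outer expectation as an explicit sum over $x$ and start from the probability side, whereas the paper starts from the expectation; no further comment is needed.
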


\begin{proof}
  Using (\ref{eq:dummydef}) and the fact that $Y^B$  are conditionally
  independent given $X$ we get
  \begin{alignat*}{3}
    \Exp\bigg\{\prod_{B\in\B'}h_B(X_B)\bigg\} &= \Exp\bigg\{\prod_{B\in\B'}\Big(\Pro\cdp*{Y^B=1}{X_B}k_B\Big)\bigg\}&\\
    &= \Exp\bigg\{\prod_{B\in\B'}\Pro\cdp*{Y^B=1}{X}\bigg\}\prod_{B\in\B'}k_B&\\
    &= \Exp\bigg\{\Pro\cdp[\Big]{\bigcap_{B\in\B'}\{Y^B=1\}}{X}\bigg\}\prod_{B\in\B'}k_B&\\
    &= \Pro\bigg(\bigcap_{B\in\B'}\{Y^B=1\}\bigg)\prod_{B\in\B'}k_B
  \end{alignat*}
  as desired. %
\end{proof}

If the distribution of the variables $\{X_v\}_{v\in V}$ is modelled by a 
Bayesian network, this network can be extended to
include the  variables $\{Y^B\}_{B \in \B}$ by for each $B$
adding $Y^B$ as a child of $\{X_v\}_{v \in B}$ with conditional
distributions of $Y^B$ in \eqref{eq:dummydef}.
As the auxiliary variables are added as children of existing network nodes, no
directed cycles are created and the extended network is a correct representation of the joint
distribution of $(X,Y)$  since, given $X_B$, $Y^B$ is conditionally independent of all other variables in the extended network.

Figure~\ref{fig:bnexample} illustrates how the network is extended in
case of a function $h$ factorising over two sets of variables $(
X_2, X_3)$ and $(X_3, X_4, X_5)$.
\begin{figure}[ht]
  \centering
\begin{tikzpicture}[node distance = 2.5em and 2em, font = \small,
  every node/.style = {draw, ellipse, fill = white}, every path/.style = {->, >=latex'}]
  \node   at (0,0) (x1){$X_1$};
  \node   [below right = 1em and 2em of x1](x3){$X_3$};
  \node   [below left = of x1](x2){$X_2$};
  \node   [right = 6em of x1](x4){$X_4$};
  \node   [below right = of x4](x5){$X_5$};
  \draw   (x1) -- (x2);
  \draw   (x3) -- (x4);
  \draw   (x1) -- (x4);
  \draw   (x4) -- (x5);
  \draw   (x1) -- (x3);
  \begin{pgfonlayer}{background}
    \draw [fill = backgroundcolor!50]\convexpath{ x3, x2}{1.5em};
    \draw [fill = backgroundcolor!50]\convexpath{x3, x4, x5}{1.5em};
    \draw [draw = black, fill = none]\convexpath{x3, x2}{1.5em};
    \draw [draw = black, fill = none]\convexpath{x3, x4, x5}{1.5em};    
  \end{pgfonlayer}

  \node [below = 6em of x1, fill = backgroundcolor](y1){$Y^{\{2,3\}}$};
  \node  [below = 6em of x4, fill = backgroundcolor](y2){$Y^{\{3,4,5\}}$};
  \draw  [dashed](x4) -- (y2);
  \draw  [dashed](x5) -- (y2);
  \draw  [dashed](x3) -- (y2);
  \draw  [dashed](x2) -- (y1);
  \draw  [dashed](x3) -- (y1);
\end{tikzpicture}
  
\caption{Extending a network with two binary variables for computation of
  $\Exp\left(h_{\{2,3\}}(X_2, X_3)h_{\{3,4,5\}}(X_3, X_4,
    X_5)\right)$. Here $\B = \{\{2, 3\},\{3, 4, 5\}\}$}
  \label{fig:bnexample}
\end{figure}
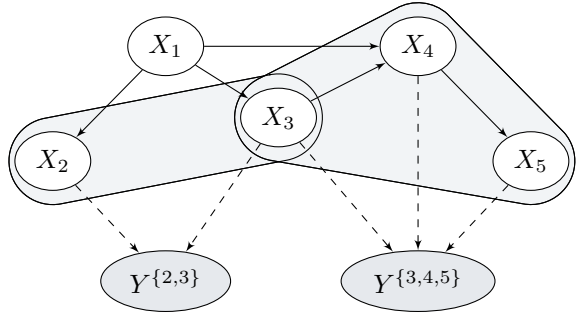

\subsubsection{Probability propagation}

We now briefly describe probability propagation and explain how to exploit the normalising
constants arising as a by-product of the propagation algorithm. 
We refer for example to \cite{cowell:etal:99} for further details.

A computational structure is set up in the form of a so-called \emph{junction tree} of subsets of the variables involved: first an undirected graph, the \emph{mo\-ra\-lised graph},  is constructed by adding undirected links between nodes that have a common child and removing directions for existing edges.  Subsequently links are added to ensure that the resulting graph is chordal. This process is known as \emph{triangulation} and can generally be done in many ways. Finally the cliques in the triangulated graph are arranged in a junction tree. 

In the situation described above $X_B$ is the parent set of $Y^B$ in
the extended network and the node set $X_B$ will  thus be a complete set in the triangulated graph, hence contained in  some clique. The efficiency of the method depends crucially on the size of cliques for the chosen triangulation, see further discussion in Section~\ref{sec:complexity} below.

A distribution $p(x)$ is represented by an unnormalised
probability function \[p(x)\propto g(x)=\frac{\prod_{C\in \C}\zeta_C(x_C)}{\prod_{S\in {\mathcal S}}\zeta_S(x_S)}\] where $\mathcal S$ denotes the set of  \emph{separators}, i.e.\ intersections of pairs of neighbouring cliques in the junction tree. The corresponding
normalising constant  is $N_1 = \sum_x g(x)$. The function $g(x)$ is known as the \emph{charge} and the functions $\zeta$ as \emph{potentials}.

A message passing operation referred to  as 
\emph{propagation} brings the charge on a canonical form, where all
potentials of the charge are equal to the function  $g$ marginalised onto the
corresponding clique or separator, i.e.\
\[\zeta_D(x_D)= \sum_{y: y_D= x_D} g(y) \mbox{ for all $D\in \C\cup \mathcal{S}$.}\]
The normalising constant can then be computed efficiently after propagation
as $\sum_{x_D}\zeta_D(x_D)$, for
instance choosing $D$ as a separator $S\in \mathcal{S}$ with minimal state space.

The charge $g$ can be modified by entering so-called \emph{likelihood evidence} $\ell_v(x_v)$ on single nodes leading to the charge 
\[\tilde g(x) = g(x)\prod_{v \in V}\ell_v(x_v)
\]
with normalising constant
\[
N_2 = \sum_{x} g(x)\prod_{v \in V}\ell_v(x_v).
\]

Taking the ratio of the normalising constants before and after
propagating the likelihood evidence yields the expectation of the
product of the likelihood evidence with respect to the distribution
$p(x)$:
\begin{align*}
  \frac{N_2}{N_1} &= \frac{\sum_x g(x)\prod_{v\in V} \ell_v(x_v)}{\sum_y g(y)}
  = \sum_x \frac{g(x)}{\sum_y g(y)}\prod_{v\in V} \ell_v(x_v)\\
  &=\sum_x  p(x)\prod_{v\in V} \ell_v(x_v)
  = \Exp \bigg\{\prod_{v\in V} \ell_v(X_v)\bigg\}.
\end{align*}

As shown in Proposition~\ref{the:nc} below, this fact now ensures that the expectation of interest can be calculated by propagating likelihood evidence on the auxiliary variables.
\begin{proposition}\label{the:nc}
 Let likelihood evidence for each node $Y^B$, $B \in \B' \subseteq \B$ be given as:
  \[
  \ell_B(Y^B) = \begin{cases}
    k_B, &Y^B = 1\\
    0, &Y^B = 0\\
  \end{cases}
  \]
  and let $N_1$ and $N_2$ be the normalising constants before and
  after propagation of the likelihood evidence. Then we have
  \[
  \Exp\bigg\{\prod_{B\in \B'}h_B(X_B)\bigg\} = \frac{N_2}{N_1}.
  \]            
\end{proposition}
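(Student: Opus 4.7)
The plan is to combine two facts already established in the excerpt: the identity $\Exp\{\prod_{B\in\B'}h_B(X_B)\} = \Pro(\bigcap_{B\in\B'}\{Y^B=1\})\prod_{B\in\B'}k_B$ from Proposition~\ref{the:exptoprob}, and the general identity $N_2/N_1 = \Exp\{\prod_{v\in V}\ell_v(X_v)\}$ derived immediately above the statement. So the whole argument reduces to showing that applying the particular likelihood evidence $\ell_B$ on the auxiliary nodes produces precisely the product on the right-hand side of Proposition~\ref{the:exptoprob}.

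First I would set up the extended Bayesian network on $(X,Y)$ as described after Proposition~\ref{the:exptoprob}, so that the charge $g$ represents the joint distribution of $(X,Y)$ and the earlier ratio-of-normalising-constants identity applies with $V$ replaced by the full extended node set. On the original nodes $X_v$ no evidence is entered, which amounts to taking $\ell_v\equiv 1$; on each auxiliary node $Y^B$ with $B\in\B'$ the evidence is $\ell_B$ as given; and on the remaining auxiliary nodes $Y^B$ with $B\notin\B'$ no evidence is entered, so again $\ell_B\equiv 1$. Plugging these trivial factors into the expectation leaves only the factors over $B\in\B'$.

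Next I would evaluate the remaining expectation. Since $\ell_B(Y^B)=k_B\ind{Y^B=1}$, we have
\[
\prod_{B\in\B'}\ell_B(Y^B)=\Bigl(\prod_{B\in\B'}k_B\Bigr)\prod_{B\in\B'}\ind{Y^B=1}=\Bigl(\prod_{B\in\B'}k_B\Bigr)\ind{\bigcap_{B\in\B'}\{Y^B=1\}},
\]
so pulling the constants out of the expectation yields
\[
\frac{N_2}{N_1}=\Bigl(\prod_{B\in\B'}k_B\Bigr)\Pro\Bigl(\bigcap_{B\in\B'}\{Y^B=1\}\Bigr).
\]
Proposition~\ref{the:exptoprob} identifies the right-hand side with $\Exp\{\prod_{B\in\B'}h_B(X_B)\}$, giving the claim.

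There is no real obstacle here: the only point that requires mild care is making sure that the ratio-of-normalising-constants identity is being applied on the extended network $(X,Y)$ and that the auxiliary nodes outside $\B'$ (as well as all original nodes) are handled by trivial unit evidence so that they contribute $1$ to the product. Once that bookkeeping is explicit, the computation is a one-line manipulation.
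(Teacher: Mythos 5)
Your proof is correct and follows essentially the same route as the paper: apply the ratio-of-normalising-constants identity on the extended network, observe that $\prod_{B\in\B'}\ell_B(Y^B)=\bigl(\prod_{B\in\B'}k_B\bigr)\ind{\bigcap_{B\in\B'}\{Y^B=1\}}$, and invoke Proposition~\ref{the:exptoprob}. If anything, you are slightly more explicit than the paper about the bookkeeping of trivial unit evidence on the $X$-nodes and on the auxiliary nodes outside $\B'$.
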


\begin{proof}
  \begin{align*}
    \frac{N_2}{N_1} &= \Exp\bigg\{\prod_{B \in \B}\ell_B(Y^B)\bigg\} \\
    &= \Exp\bigg(\prod_{B \in \B'}k_B \ind{Y^B = 1}\bigg)\\
    &=  \Pro\bigg(\bigcap_{B \in\B'}\{Y^B = 1\}\bigg)\prod_{B \in \B'}k_B
  \end{align*}
  which by Proposition~\ref{the:exptoprob} equals the desired
  expectation. 
\end{proof}
\subsection{A Bayesian network representation of genotypes}\label{sec:gtnet}

The multinomial distribution of allele-counts $(n_{i1},\dots, n_{iA})$ representing the genotype of individual $i$ does not  in itself have  Markovian properties. However, if we define the partial sums $S_{ia} = \sum_{b = 1}^a n_{ia}$ counting the number of
alleles of type up to and including $a$ that person $i$ possesses, we can represent the genotype in a Bayesian network as displayed in Figure~\ref{fig:onegt}.

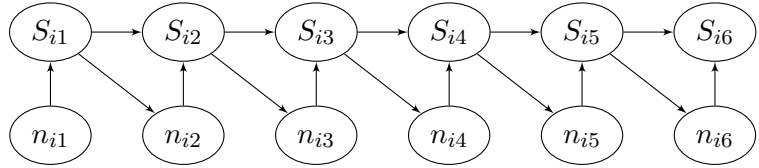
\begin{figure}[htb]
  \centering
  \begin{tikzpicture}[every path/.style={->, >=latex'}]    
  \tikzstyle{BNnode}=[shape = ellipse, fill = white, draw, font = \normalsize, text height=1.5ex,text depth=.25ex]
  \tikzstyle{GTmat}=[matrix of math nodes,  execute at empty cell={\node [draw = none, fill = none]{};},
  ampersand replacement=\&, nodes = {BNnode}, row sep = 1.5\myunit, column sep = 1.7\myunit]
  \tikzstyle{Oarrow}=[]

  \matrix (gt) [GTmat]{
    S_{i1} \& S_{i2} \& S_{i3} \& S_{i4} \& S_{i5} \& S_{i6} \\
    n_{i1} \& n_{i2} \& n_{i3} \& n_{i4} \& n_{i5} \& n_{i6} \\
  };
  
  \foreach \this [evaluate = \this as \prev using int(\this-1)] in {1, ...,6}
  {
    \draw (gt-2-\this) -- (gt-1-\this); 
    
    \ifthenelse{\NOT 1 = \this}{ 
      \draw (gt-1-\prev) -- (gt-1-\this); 
      \draw (gt-1-\prev) -- (gt-2-\this); 
    }
    {};
  }
\end{tikzpicture}  

  \caption{Network representation of a genotype at a marker with $A=6$ allelic types.}
  \label{fig:onegt}
\end{figure}

If we imagine the two alleles in the genotype being
allocated sequentially, then the number of alleles that a person has
of type $a+1$ only depends on how many alleles of the total two are left
to allocate, and the allocation happens according to a binomial
distribution. In Proposition~\ref{prop:multinomial} we establish the formal correctness of the network specification. 

\begin{proposition}\label{prop:multinomial} The distributions of genotypes and partial sums satisfy the following relations
  \begin{align}  
  S_{i1}&= n_{i1}, \nonumber \\
  n_{i1} &\sim \bin{2}{q_1}, \nonumber
  \intertext{and for $\ a \in \{2, \ldots, A\}$}
   S_{ia} &= S_{i,a-1} + n_{ia}, \nonumber\\
  n_{ia} \cd S_{i,a-1} &\sim \bin{2-S_{i,a-1}}{q_{a}/\textstyle{\sum_{b=a}^A q_b}}.
  \label{eq:condbin}
    \end{align}
    Finally, we have the conditional independence relations
    \begin{align}
    n_{ia} &\indep (n_{i1}, \ldots, n_{i,a-1}, S_{i1}, \ldots, S_{i,a-2}) \cd S_{i,a-1} \label{eq:condindep}\\
    S_{ia} &\indep (n_{i1}, \ldots, n_{i,a-1}, S_{i1}, \ldots, S_{i,a-2}) \cd (S_{i,a-1}, n_{ia}). \nonumber
  \end{align}
\end{proposition}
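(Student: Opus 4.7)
My plan is to derive all six assertions from the defining multinomial distribution of $(n_{i1},\ldots,n_{iA})$ through an explicit i.i.d.\ representation. I would write the genotype as the outcome of two i.i.d.\ draws $X_1,X_2$ from the categorical distribution on $\{1,\ldots,A\}$ with weights $(q_1,\ldots,q_A)$, so that $n_{ia}=\#\{j:X_j=a\}$ and $S_{ia}=\#\{j:X_j\le a\}$. The two recursions $S_{i1}=n_{i1}$ and $S_{ia}=S_{i,a-1}+n_{ia}$ are then immediate from the definition of the partial sums, and the marginal $n_{i1}\sim\bin{2}{q_1}$ is the classical statement that a single coordinate of a multinomial vector is binomial.

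The key step is the conditional distribution \eqref{eq:condbin}, which I would obtain from the multinomial \emph{splitting} property: conditional on $\{S_{i,a-1}=s\}$, the $2-s$ draws that do not fall in $\{1,\ldots,a-1\}$ are i.i.d.\ from the categorical distribution on $\{a,\ldots,A\}$ with renormalised weights $q_b/\sum_{c=a}^A q_c$. Counting the draws equal to $a$ among these then yields $n_{ia}\mid S_{i,a-1}=s \sim \bin{2-s}{q_a/\sum_{b=a}^A q_b}$. The splitting property itself is a short direct computation from the multinomial probability mass function, and it is the only genuinely substantive calculation in the proof.

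The two conditional independence statements then follow with no further work. The same splitting argument shows that, given $S_{i,a-1}$, the blocks $(n_{i1},\ldots,n_{i,a-1})$ and $(n_{ia},\ldots,n_{iA})$ are independent, each being multinomial over its respective range of categories with renormalised probabilities. Since $(S_{i1},\ldots,S_{i,a-2})$ is a deterministic function of the first block while $n_{ia}$ lies in the second, this gives \eqref{eq:condindep}. The final independence statement for $S_{ia}$ is then automatic, because the identity $S_{ia}=S_{i,a-1}+n_{ia}$ expresses $S_{ia}$ as a deterministic function of the conditioning variables $(S_{i,a-1},n_{ia})$, and any deterministic function of the conditioning variables is trivially conditionally independent of everything else.
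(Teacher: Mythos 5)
Your proof is correct, but it is organised differently from the paper's. The paper works directly with the joint probability mass function: it computes $p(n_{ia}\mid n_{i1},\ldots,n_{i,a-1})$ as a ratio of two marginal multinomial probabilities (obtained by lumping the categories above $a$, respectively $a-1$, into a single cell), simplifies the ratio to the binomial form in \eqref{eq:condbin}, and then reads off the conditional independence \eqref{eq:condindep} from the observation that this conditional distribution depends on the history only through $S_{i,a-1}$. You instead pass through the i.i.d.\ categorical representation of the two allele draws and invoke the multinomial splitting property, deriving first the stronger statement that, given $S_{i,a-1}$, the blocks $(n_{i1},\ldots,n_{i,a-1})$ and $(n_{ia},\ldots,n_{iA})$ are independent with renormalised multinomial distributions, and then specialising to get both \eqref{eq:condbin} and \eqref{eq:condindep}. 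The two routes rest on the same algebra --- the ``short direct computation'' you defer to in establishing the splitting property is essentially the ratio of lumped multinomial pmfs that the paper writes out explicitly --- but your packaging is more probabilistic and yields the full block independence as a by-product, whereas the paper's is more self-contained at the cost of a longer display. Your handling of the remaining assertions (the definitional recursions, $n_{i1}\sim\bin{2}{q_1}$ as a multinomial marginal, and the final conditional independence being trivial because $S_{ia}$ is a deterministic function of the conditioning pair) matches the paper's, which disposes of them in one sentence.
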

\begin{proof}The unnumbered relations follow directly from the definition of the quantities involved. We further have
\begin{eqnarray*}
\lefteqn{p(n_{ia}\cd n_{i1}, \ldots, n_{i,a-1})=
\frac{p(n_{i1}, \ldots, n_{i,a-1},n_{ia})}{p(n_{i1}, \ldots, n_{i,a-1})}}\\&=&
\frac{
\frac{2!}{(2- S_{i,a-1}-n_{ia})!\prod_{b=1}^{a}n_{ib}!}
 \left( \sum_{b=a+1}^{A}q_b\right)^{2- S_{i,a-1}-n_{ia}}{\prod_{b = 1}^a q_b^{n_{ib}}}}
{\frac{2!}{(2- S_{i,a-1})!\prod_{b=1}^{a-1}n_{ib}!} \left( \sum_{b=a}^{A}q_b\right)^{2- S_{i,a-1}}\prod_{b = 1}^{a-1} q_b^{n_{ib}}}\\&=&
\frac{(2-S_{i,a-1})!}{n_{ia}!(2- S_{i,a-1}-n_{ia})!}\\&&\times
\left( 1- \frac{q_a}{\sum_{b=a}^{A}q_b}\right)^{2- S_{i,a-1}-n_{ia}}\left(\frac{q_a}{\sum_{b=a}^{A}q_b}\right)^{n_{ia}}.
\end{eqnarray*}
The conditional independence \eqref{eq:condindep} follows from the fact that the conditional distribution of $n_{ia}$ given $n_{i1}, \ldots, n_{i,a-1}$ only depends on the condition through $S_{i,a-1}$; inspection of the expression for the conditional distribution yields \eqref{eq:condbin}. 
\end{proof}

\subsection{Auxiliary variables for computing the likelihood
  function}\label{sec:likelihood}

In order to compute the inner expectation in \eqref{eq:likfunction}, we note
that this is an expectation of a product over alleles, where each
factor is a function of the variables $\bn_a$ and $\bn_{a+1}$, and so
we can compute this expectation using auxiliary variables as described in Section~\ref{sec:dummy}:
For each allele $a$, we add an auxiliary variable $O_a$ with parents
$n_{ia}$ and $n_{i,a+1}$ for all unknown contributors $i$, except for
$O_A$ that is given only one parent $n_{iA}$ per contributor. Figure~\ref{fig:gtnet} shows the network for modelling one marker of a
mixture with two contributors and six alleles. 
\begin{figure}[htb]
  \centering
  \begin{tikzpicture}[every path/.style={->, >=latex'}]
    
  \tikzstyle{BNnode}=[shape = ellipse, fill = white, draw, font = \small, text height=1.5ex,text depth=.25ex]
  \tikzstyle{GTmat}=[matrix of math nodes,  execute at empty cell={\node [draw = none, fill = none]{};},
  ampersand replacement=\&, nodes = {BNnode}, row sep = 1.5\myunit, column sep = 1.7\myunit]
  \tikzstyle{Oarrow}=[]

  \matrix (gt) [GTmat]{
    S_{i1} \& S_{i2} \& S_{i3} \& S_{i4} \& S_{i5} \& S_{i6} \\
    n_{i1} \& n_{i2} \& n_{i3} \& n_{i4} \& n_{i5} \& n_{i6} \\
    [+1.5\myunit]
    O_1 \& O_2 \& O_3 \& O_4 \& O_5 \& O_6 \\
    [+1.5\myunit]
    n_{j1} \& n_{j2} \& n_{j3} \& n_{j4} \& n_{j5} \& n_{j6} \\
    S_{j1} \& S_{j2} \& S_{j3} \& S_{j4} \& S_{j5} \& S_{j6} \\
  };
  
  \foreach \this [evaluate = \this as \prev using int(\this-1)] in {1, ...,6}
  {
    \draw (gt-2-\this) -- (gt-1-\this); 
    \draw (gt-4-\this) -- (gt-5-\this); 
    \draw [Oarrow](gt-2-\this) -- (gt-3-\this); 
    \draw [Oarrow](gt-4-\this) -- (gt-3-\this); 
    
    \ifthenelse{\NOT 1 = \this}{ 
      \draw (gt-1-\prev) -- (gt-1-\this); 
      \draw (gt-1-\prev) -- (gt-2-\this); 
      \draw (gt-5-\prev) -- (gt-5-\this); 
      \draw (gt-5-\prev) -- (gt-4-\this); 
      \draw [Oarrow](gt-2-\this) -- (gt-3-\prev); 
      \draw [Oarrow](gt-4-\this) -- (gt-3-\prev); 
    }
    {};
  }
  \begin{pgfonlayer}{background}
    \node[rectangle, rounded corners = \myunit, draw = none, fill = black!10, fit = (gt-1-1)(gt-2-6)](bg1){};
    \node[rectangle, rounded corners = \myunit, draw = none, fill = black!10, fit = (gt-4-1)(gt-5-6)](bg2){};
  \end{pgfonlayer}
\end{tikzpicture}  
\caption{Bayesian network modelling the genotypes of 2 unknown
  contributors $i$ and $j$ for a marker with 6 possible allelic types.}
  \label{fig:gtnet}
\end{figure}
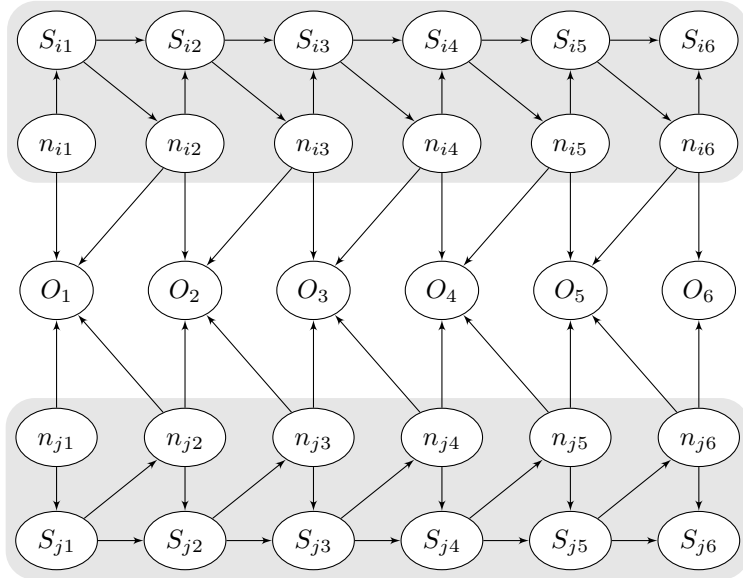
Note that $O_a$ and its parents
$n_{ia}, n_{i,a+1}$, $i \in \{1, \ldots, k\}$ are necessarily contained in the
same clique, implying that any valid junction tree will contain cliques with an associated state space that is
exponential in the number $k$ of unknown contributors. Unfortunately, as
the moralised graph is not chordal -- for instance $(S_{i1},n_{i1},
n_{j2}, n_{i3}, S_{i2}, S_{i1})$ is a cycle -- further edges need to
be added, resulting in an additional increase in the size of the cliques. We
shall return to this issue in Section~\ref{sec:complexity}.

The distribution of a peak height $Z_a$ conditionally on the allele-counts is for $a<A$
\begin{equation}
  f_\psi\cdp{z_a}{\bn_a, \bn_{a+1}}
  =
  \begin{cases}
    g_\psi\cdp*{z_a}{\bn_a, \bn_{a+1}}, & z_a \geq C\\
    G_\psi\cdp*{C}{\bn_a, \bn_{a+1}}, & z_a < C
  \end{cases}\label{eq:peakdensity}
\end{equation}
where $g$ and $G$ denotes the density respectively the cumulative
distribution function for the gamma distribution with parameters as in \eqref{eq:gammadist}. 

Define the distribution of $O_a$ for an observed allele, where $z_a \ge
C$, as
\begin{equation}\label{eq:condobs}
  \Pro\cdp{O_a = 1}{\bn_a, \bn_{a+1}} = g_\psi\cdp{z_a}{\bn_a, \bn_{a+1}}/k^\psi_a,
\end{equation}
noting the dependence of the scaling factor $k^\psi_a$
on $\psi$. 
For an unobserved allele, where $z^m_a = 0$, let the distribution of $O_a$ be defined as
\begin{equation}\label{eq:condunobs}
  \Pro\cdp{O_a = 0}{\bn_a, \bn_{a+1}} = G_\psi\cdp{C}{\bn_a, \bn_{a+1}}.
\end{equation}
For convenience we have defined the auxiliary variables so that for all
alleles the event $O_a = 1$ corresponds to the event that the peak at
allele $a$ is above the  threshold $C$.

Now Proposition~\ref{the:nc} can readily be used to evaluate the
contribution to the likelihood from marker $m$ for a given value of $\psi$ by
propagating likelihood evidence 
\begin{equation}
  \label{eq:likevidence}
  \ell_a(O_a) =
  \begin{cases}
    k^\psi_a \ind{O_a = 1}, &\textrm{if $a$ is seen}\\
    \ind{O_a = 0}, &\textrm{if $a$ is unseen.}    
  \end{cases}  
\end{equation}

\subsection{Posterior distribution of genotypes}\label{sec:posterior}
 When entering and propagating likelihood evidence as in \eqref{eq:likevidence} for $a$ in a set of alleles $B$, we obtain a representation of the conditional distribution of the full network given the relevant state of the auxiliary variables $O_a, a\in B$. Furthermore, 
this distribution is identical to the conditional distribution of the nodes in the network given the peak
height information $\{z_a\}_{a\in B}$:
\begin{equation} p\cdp*{x}{\{z_a\}_{a\in B}} = p\cdp[\Big]{x}{\bigcap_{\stackrel{a\in B,}{z_a > C}} \{O_a = 1\}\bigcap_{\stackrel{a\in B,}{z_a = 0}} \{O_a = 0\}}
\label{eq:evidence}
\end{equation}
 This follows from the following argument:
\begin{eqnarray*}
\lefteqn{p(x)\prod_{a\in B}\ell_a(O_a)}\\
  &\propto&p\cdp[\Big]{x}{\bigcap_{\stackrel{a\in B,}{z_a > C}} \{O_a = 1\}\bigcap_{\stackrel{a\in B,}{z_a = 0}} \{O_a = 0\}} \nonumber \\
  &\propto& p(x)\Pro\cdp[\Big]{\bigcap_{\stackrel{a\in B,}{z_a > C}} \{O_a = 1\}\bigcap_{\stackrel{a\in B,}{z_a = 0}} \{O_a = 0\}}{x}\nonumber \\
  & =&  p(x)\prod_{\stackrel{a\in B}{z_a > C}}\Pro\cdp*{O_a = 1}{x}\prod_{\stackrel{a\in B}{z_a = 0}}\Pro\cdp*{O_a = 0}{x}\nonumber \\
  & =&  p(x)\prod_{\stackrel{a\in B}{z_a > C}}\Pro\cdp*{O_a = 1}{\bn_{a}, \bn_{a+1}}\prod_{\stackrel{a\in B}{z_a = 0}}\Pro\cdp*{O_a = 0}{\bn_{a}, \bn_{a+1}}\nonumber \\
  & =&  p(x)\prod_{\stackrel{a\in B}{z_a > C}}\{g_\psi\cdp{z_a}{\bn_a, \bn_{a+1}}/k^\psi_a\}\prod_{\stackrel{a\in B}{z_a = 0}}G_\psi\cdp{C}{\bn_a, \bn_{a+1}}\nonumber \\
  & \propto&  p(x)\prod_{a\in B} f_\psi\cdp*{\{z_a\}_{a\in B}}{x} \nonumber\\
  & \propto& p\cdp*{x}{\{z_a\}_{a\in B}}. 
\end{eqnarray*}

As a consequence,  we can easily sample from the conditional distribution of genotypes given peak height information, which we shall exploit  in Sections \ref{sec:simulation} and \ref{sec:map} below.

\subsection{Network complexity considerations}

The main concerns when applying computation by auxiliary variables to a specific problem are that the junction tree representation of the network may not fit in the physical memory, and propagation and other network operations may take prohibitively long. Both of these issues are directly related to the \emph{total size} of the network junction tree. 
An additional concern lies in finding a good triangulation, as this can be both time- and memory-consuming; we eliminate this additional cost by specifying triangulations directly.

The total size of the
junction tree is the sum of the sizes of state spaces for all
cliques and separators and  
determines how many numbers are needed to store the clique and
separator tables. 

Once a junction tree has been created for a network, computation by
auxiliary variables involves
setting the conditional probability tables for each auxiliary variable and propagating evidence.
The number of elementary arithmetic operations for propagation is linear in the total size. Also, the number of cells that need updating when the conditional probability tables for the auxiliary variables change is, in the worst case, determined by the total size. 

In the following we study the relation of the total sizes of junction tree representations  used for DNA mixture analysis to the number $A$ of possible alleles at a marker and the number $k$ of unknown contributors.

\subsubsection{Junction tree sizes for DNA mixtures}\label{sec:complexity}

We shall consider three different triangulations of networks of the type discussed in Section~\ref{sec:likelihood} and investigate the behaviour of the total sizes of the corresponding junction trees. 
We restrict attention to mixture networks 
where any allele $a$ --- apart from the last allele $A$ --- can receive stutter
from $a+1$.

Any triangulation must necessarily have cliques that contain  auxiliary variables with their parent sets as these are complete sets in the moralised graph.  For all our junction trees we avoid adding additional variables to all such sets and simply combine any auxiliary variable with its parent set to form a clique. We can thus focus the discussion on triangulating the part of the moralised graph that does not involve auxiliary variables.

If we have $N$ binary auxiliary variables per allele, their cliques and corresponding separators contribute to the total size of the junction tree by 
\[
TS_{\mathrm{aux}} = 3N\left\{(A-1)3^{2k} + 3^k\right\},
\]
since there are $N(A-1)$ cliques containing an auxiliary variable along with its $2k$ parents, and each is separated from the remaining junction tree by a separator containing the $2k$ parents. The $N$ auxiliary variables for the last allele have only $k$ parents.

Bearing Figure~\ref{fig:onegt} in mind, the structure of the genotype networks requires \emph{upper triangle} sets
$\{S_{i,a-1}, S_{ia}, n_{ia}\}$ to be in a clique as they are
complete sets. If allele $a-1$ receives stutter from $a$, then the
\emph{lower triangle} set $\{n_{i,a-1}, n_{ia}, S_{ia}\}$ is also complete in the moralised graph and must be
contained in some clique. 

The first triangulation method we shall consider, uses the simple idea of slicing the network into cliques
\[\{S_{ia}, S_{i,a+1}, n_{ia} , n_{i,a+1}\}_{i=1}^k\] for $a=1,\ldots,A$. The corresponding junction tree, which we shall refer to as the \emph{slice tree}, is displayed in Figure~\ref{fig:slicetree}. In addition to the cliques and separators arising from the auxiliary variables, the slice tree has $A-1$ cliques each consisting of $4k$ nodes, and $A-2$ separators between them, each consisting of $2k$ nodes. Thus the total size of the slice tree becomes
\[
TS_{\mathrm{slice}} = (A-1)3^{4k} + (A-2) 3^{2k} + TS_{aux}.
\]

\begin{figure}[htb]
  \centering
  \begin{tikzpicture}[node distance = 1em, draw = black!50]
    \foreach \a [evaluate=\a as \an using int(\a+1), evaluate=\a as \ap using int(\a-1)] in {1, ..., 3}{
      \ifthenelse{\a > 1}{
      \clique{C-\a}{Cstyle, right = 3em of C-\ap}{{S_{1\a},S_{1\an},n_{1\a},n_{1\an}},{S_{2\a},S_{2\an},n_{2\a},n_{2\an}},{S_{3\a},S_{3\an},n_{3\a},n_{3\an}}}
        \draw (C-\ap) -- (C-\a);}{
        \clique{C-\a}{Cstyle}{{S_{1\a},S_{1\an},n_{1\a},n_{1\an}},{S_{2\a},S_{2\an},n_{2\a},n_{2\an}},{S_{3\a},S_{3\an},n_{3\a},n_{3\an}}}
}
      \Bclique{B-\a}{Cstyle, below = of C-\a}{{n_{1\a},n_{1\an}},{n_{2\a},n_{2\an}},{n_{3\a},n_{3\an}}}{O_\a}
      \draw (B-\a) -- (C-\a);
      \ifthenelse{\a = 3}{
        \Bclique[1]{B-\an}{Cstyle, right = of B-\a}{{n_{1\an}},{n_{2\an}},{n_{3\an}}}{O_\an}
        \draw (B-\an) -- (B-\a);
      }{}
    }   
  \end{tikzpicture}
  \caption{Slice junction tree for $k=3$ contributors, $A=4$ alleles, and $N=1$ auxiliary variable per allele.}
  \label{fig:slicetree}
\end{figure}
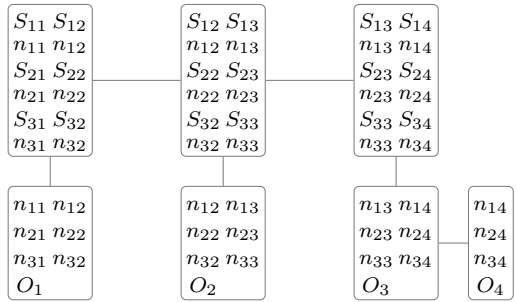
 
However, we can improve on this triangulation by splitting each slice into two cliques as Figure~\ref{fig:initsplit} illustrates.
\begin{figure}[htb]\begin{center}
\begin{tikzpicture}[draw = black!50]
  \clique{C-1}{Cstyle}{{S_{1,a},S_{1,a+1},n_{1,a},n_{1,a+1}}, {S_{2,a},S_{2, a+1},n_{2,a},n_{2,a+1}}, {S_{3,a},S_{3,a+1},n_{3,a},n_{3,a+1}}}
  \clique{C-2a}{Cstyle, right = 3em of C-1}{{S_{1,a},,n_{1,a},n_{1,a+1}}, {S_{2,a},,n_{2,a},n_{2,a+1}}, {S_{3,a},,n_{3,a},n_{3,a+1}}}
  \clique{C-2b}{Cstyle, right = 1em of C-2a}{{S_{1,a},S_{1,a+1},,n_{1,a+1}},{S_{2,a},S_{2,a+1},,n_{2,a+1}},{S_{3,a},S_{3,a+1},,n_{3,a+1}}}
   \draw[|->, draw = black] ( $ (C-1.east)!.25!(C-2a.west) $ ) -- ( $ (C-1.east)!.75!(C-2a.west) $ );
  \draw (C-2a) -- (C-2b);
\end{tikzpicture}
\end{center}
\caption{\label{fig:initsplit} Splitting each slice into two cliques consisting of lower and upper for a reduction in total size.}
\end{figure}
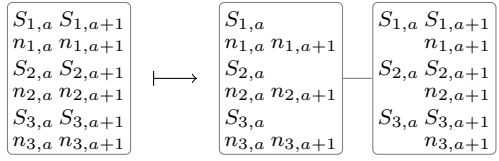
The resulting  \emph{triangle tree} has $2(A-1)$ cliques of each $3k$ nodes and $2(A-1)$ separators of each $2k$ nodes, and thus the total size
\[
TS_{\mathrm{triangle}} = 2(A-1)3^{3k} + \{2(A-1)-1\}3^{2k} + TS_{\mathrm{aux}}
\]
grows significantly slower with the number of unknown contributors than the slice tree; see Figure~\ref{fig:totalsize}.

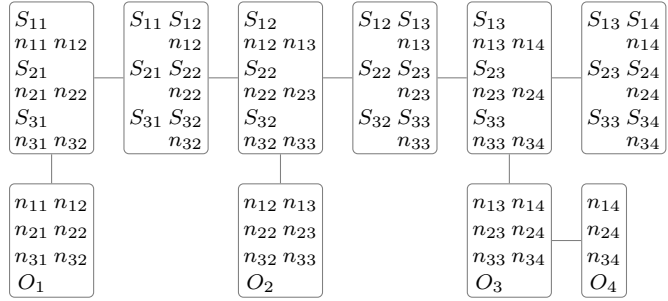
\begin{figure}[htb]
  \centering
  \begin{tikzpicture}[node distance = 1em, draw = black!50]
    \foreach \a [evaluate=\a as \an using int(\a+1), evaluate=\a as \ap using int(\a-1)] in {1, ..., 3}{
      \ifthenelse{\a = 1}{
        \clique{D-1-\a}{Cstyle}{{S_{1\a},,n_{1\a},n_{1\an}},{S_{2\a},,n_{2\a},n_{2\an}},{S_{3\a},,n_{3\a},n_{3\an}}}
      }
      {
        \clique{D-1-\a}{Cstyle, right = of D-2-\ap}{{S_{1\a},,n_{1\a},n_{1\an}},{S_{2\a},,n_{2\a},n_{2\an}}, {S_{3\a},,n_{3\a},n_{3\an}}}  
        \draw (D-2-\ap) -- (D-1-\a);
      }
      \clique{D-2-\a}{Cstyle, right = of D-1-\a}{{S_{1\a},S_{1\an},,n_{1\an}},{S_{2\a},S_{2\an},,n_{2\an}},{S_{3\a},S_{3\an},,n_{3\an}}}
      \draw (D-1-\a) -- (D-2-\a);
      \Bclique{B1-\a}{Cstyle, below = of D-1-\a}{{n_{1\a},n_{1\an}},{n_{2\a},n_{2\an}},{n_{3\a},n_{3\an}}}{O_\a}
      \draw (B1-\a) -- (D-1-\a);
      \ifthenelse{\a = 3}{
        \Bclique[1]{B1-\an}{Cstyle, right = of B1-\a}{{n_{1\an}},{n_{2\an}},{n_{3\an}}}{O_\an}
        \draw (B1-\an) -- (B1-\a);
      }{}
    }
  \end{tikzpicture}
  \caption{Triangle junction tree for $k=3$ contributors, $A=4$ alleles, and $N=1$ auxiliary variable per allele.}
  \label{fig:triangletree}
\end{figure}

In the case of only one unknown contributor, the total size of the triangle tree
cannot be reduced. However, with more than one unknown contributor, each
clique containing $k$ upper triangles can be further split into $k$
cliques as in Figure~\ref{fig:split}.
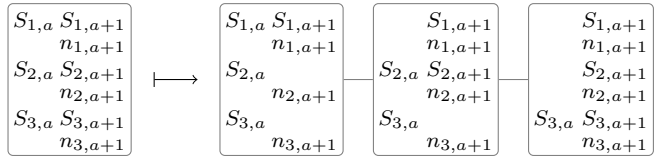
\begin{figure}[htb]
\begin{center}
\begin{tikzpicture}[draw = black!50]
  \clique{C-1}{Cstyle}{{S_{1,a},S_{1,a+1},,n_{1,a+1}},{S_{2,a},S_{2,a+1},,n_{2,a+1}},{S_{3,a},S_{3,a+1},,n_{3,a+1}}}
  \clique{C-2a}{Cstyle, right = 3em of C-1}{{S_{1,a},S_{1,a+1},,n_{1,a+1}},{S_{2,a},,,n_{2,a+1}},{S_{3,a},,,n_{3,a+1}}}
  \clique{C-2b}{Cstyle, right = 1em of C-2a}{{,S_{1,a+1},,n_{1,a+1}},{S_{2,a},S_{2,a+1},,n_{2,a+1}},{S_{3,a},,,n_{3,a+1}}}
  \clique{C-2c}{Cstyle, right = 1em of C-2b}{{,S_{1,a+1},,n_{1,a+1}},{,S_{2,a+1},,n_{2,a+1}},{S_{3,a},S_{3,a+1},,n_{3,a+1}}}
  \draw[|->, draw = black] ( $ (C-1.east)!.25!(C-2a.west) $ ) -- ( $ (C-1.east)!.75!(C-2a.west) $ );
  \draw (C-2a) -- (C-2b);
  \draw (C-2b) -- (C-2c);
\end{tikzpicture}
\end{center}
\caption{\label{fig:split} Splitting upper triangle cliques for a further reduction in total size.}
\end{figure}
Note that the cliques containing $k$ lower triangle sets cannot be split in a similar fashion. 
The resulting junction tree then has $A-1$ cliques of each $3k$ nodes, a further $k(A-1)$ of each $2k+1$ nodes, and $(k+1)(A-1)-1$ separators of $2k$ nodes between them. The total size of the tree is thus
\[ TS_{\mathrm{opt}} = (A-1)3^{3k} + \{(4k+1)(A-1)-1\}3^{2k} +
TS_{\mathrm{aux}}.
\]
 
A further slight reduction of the total size can be obtained by a small alteration in the cliques that cover nodes from the first two and last three alleles; the resulting tree is seen in Figure~\ref{fig:optjts}. We shall refer to this tree as the  \emph{optimal} tree, as this is the best junction tree we have been able to construct. We have also investigated junction trees found by using triangulation algorithms implemented in \hugin but none have smaller total size than our optimal tree. 

\begin{figure}[htb]
  \centering
  \includegraphics[width = \textwidth]{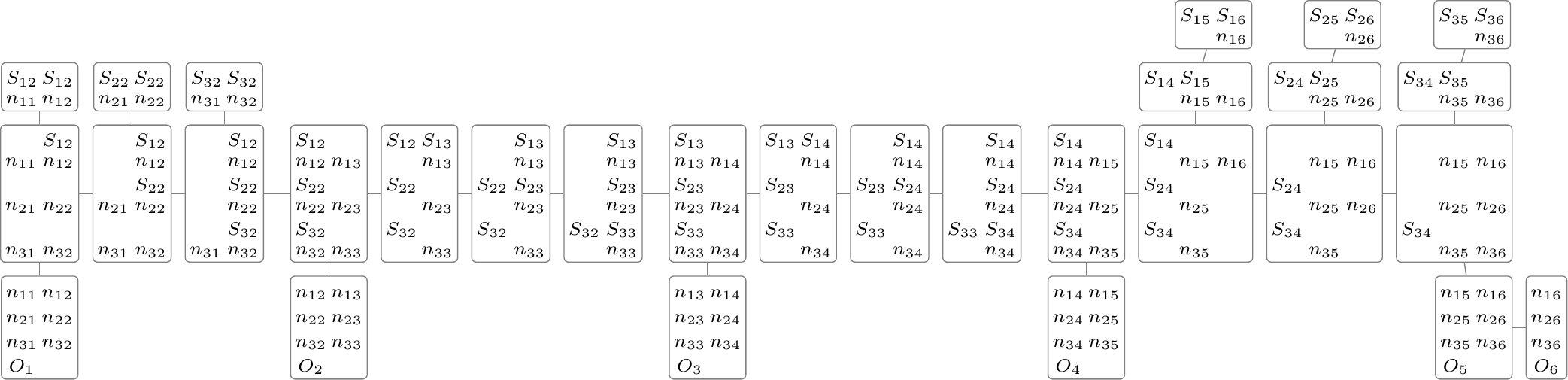}
  \caption{Optimal junction tree for a DNA mixture network with $k=3$, $A=6$, and $N=1$.}
  \label{fig:optjts}
\end{figure}

The optimal junction tree can be generated by an elimination sequence which first eliminates all the auxiliary variables and then proceeds
through the network nodes as
\[
\bS_{A}, \bS_{A-1}, \bS_{1}, \bn_{1}, \left\{\bn_{a}, \bS_a\right\}_{a = 2}^{A-2}, \bn_{A-1}, \bn_{A}
\]
where $\bS_a$ denotes $\{S_{ia}\}_{i=1}^k$ etc. %

The exponential growth of the total size of the three types of junction tree is illustrated in Figure~\ref{fig:totalsize}. Our numerical examples all include $N=3$ auxiliary variables for
each allele to reflect the size of the networks used in the R-package \texttt{DNAmixtures}. The choice of $N$ makes little difference to the total size as this in all cases grows linearly with $N$.

The network representations constructed for the genotypes have a large number of state combinations that are impossible, for example due to the constraint that $\sum_a n_{ia}=2$ for all $i$. In \hugin there is a facility to \emph{compress} the domain, such that
only configurations of clique and separator states with non-zero probability are stored, thus reducing the effective size of the junction tree. There is a slight cost in terms of
book-keeping, but for our purposes this cost is negligible.

\begin{figure}[htb]
  \centering
  \includegraphics{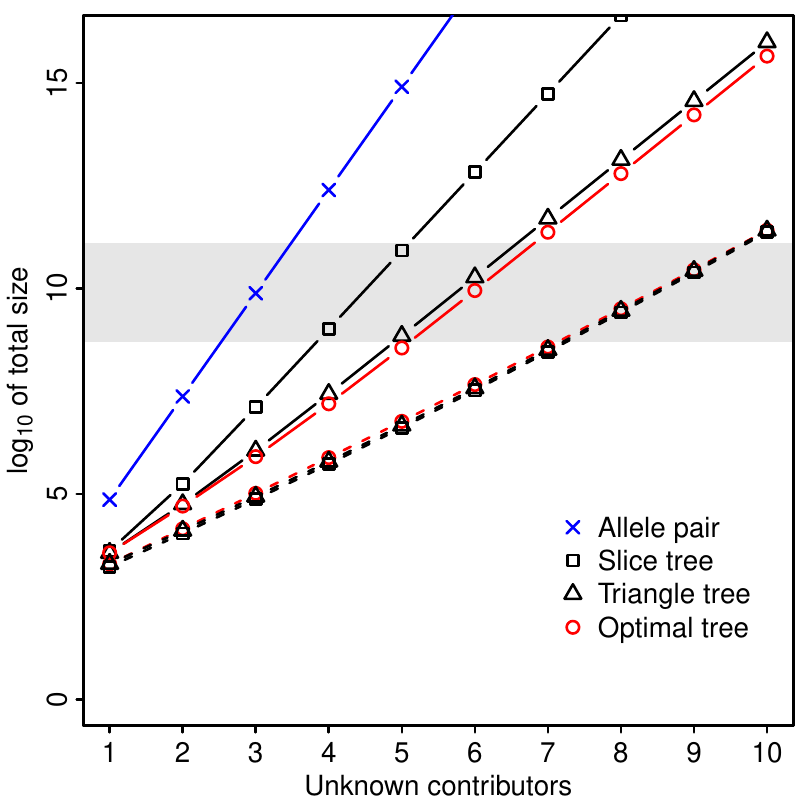}
  \caption{\label{fig:totalsize} Total sizes of junction trees
    as a function of the number $k$ of unknown contributors, in the
    case of $A=25$ allelic types and $N=3$ auxiliary variables per
    allele. Solid lines are uncompressed sizes and dashed lines
    compressed sizes. The horisontal band indicates total sizes ranging from 2GB to 512GB assuming numbers are represented in single precision.}
\end{figure}

As is apparent from Figure~\ref{fig:totalsize}, the exponential growth pattern prevails for the compressed domains. Note that after compression all three junction trees are approximately of the same size. Also, the reduction of total size obtained by compression is itself growing exponentially;
ignoring any slight reduction in total size 
from compressing states with probability zero in the cliques with auxiliary variables, the total size for the compressed slice tree is
\[
TS_{\mathrm{compr.slice}} = (A-3)10^k + \left\{3N(A-1)+A\right\}6^k + 3N 3^k.
\]

 In general, to make a compression, one single propagation has to be performed and therefore the uncompressed networks set the limit for computational feasibility. When numbers are represented in single precision of each four bytes, the horisontal band in Figure~\ref{fig:totalsize} represents a range of  capacities from 2GB to 512 GB of memory.

Figure~\ref{fig:totalsize} indicates that using the optimal junction tree should enable computation for up to $k=6$ unknown contributors, whereas using the slice tree restricts computation to around $k=4$.  

There is a simple way of compressing the slice tree in that there are at most 10 possible configurations of the states in each of $\{S_{ia}, S_{i,a+1}, n_{ia} , n_{i,a+1}\}$. So if the state space is defined by these from the outset, it would in principle be possible to handle up to $k=9$ unknown contributors, as it the compressed network would determine the maximal capacity; however, the general flexibility of the representation would be reduced.

\subsubsection{Other representations of genotypes}

Clearly, the network that represents the genotype of an unknown
contributor could be replaced by a different representation than the one suggested here and connected to the auxiliary variables in an appropriate way. We shall briefly consider two alternative representations of a genotype.

\paragraph{Allele-pair representation} More commonly, a genotype has been represented directly as an unordered pair of alleles; this representation has for example been used in \cite{Cowell2011202}. Including $A$ alleles in the model there are
$A(A+1)/2$ possible unordered pairs. If an allele-pair is represented by a single node for each contributor, the parent set for each auxiliary variable  in this network is the collection of the $k$ unknown genotype-nodes, resulting in a junction tree where each clique and each separator contains all of the $k$ genotype-nodes. 
Adding $N$ auxiliary variables for each of $A$ alleles yields the total size  
\[
TS_{\textrm{allele-pair}} = (3NA-1)\left\{A(A+1)/2\right\}^k.
\] 

We note that this junction tree exhibits polynomial rather than linear growth in $A$, rendering the representation less efficient for markers with a large number of possible allelic types. For a fixed number of alleles, the growth in the number $k$ of unknown contributors is still exponential; see Figure~\ref{fig:totalsize}. For junction trees based on the Markov representation of genotypes, the number of alleles makes a neglible impact on the total size. However, for the allele-pair representation the rate of growth depends heavily on the number of allelic types: For 25 alleles as in Figure~\ref{fig:totalsize} it is feasible to handle up to about 3 unknown contributors, whereas if only 10 allelic types are needed, then 4-5 unknown contributors can be handled. For 7 or more allelic types,  the Markov representation in combination with optimal triangulation is superior to the allele-pair representation regardless of the number of unknown contributors. 
As the allele-pair representation is compressed by construction, there is no possibility of further compression of the junction tree.

\paragraph{Single gene representation} Another possibility, used for example in \cite{dawid:etal:02} and \cite{mortera:etal:03}, is to model the genotype at the single gene level. A single gene can
be represented by the same Markovian network structure as that in Figure~\ref{fig:onegt} used for a genotype, just that each node $n_{ia}$ or $S_{ia}$ has state space $\{0,1\}$ rather than
$\{0,1,2\}$. However, there is a cost in that two such networks are needed per unknown contributor, 
resulting in a total size with growth-rate $O(A\times 2^{3(2k)})$ compared to $O(A\times3^{3k})$ when using the genotype representation. Thus, the single gene network will always be inferior to the genotype network.

The total size of the optimal single gene tree renders computations  feasible for up to about 5 unknown contributors. 
 Compression of the single gene slice tree yields a growth rate of $O(A\times 16^k)$, which still is considerably higher than $O(A \times 10^k)$ for the corresponding compressed genotype slice tree. It would stay feasible if $k\leq 7$.

For $A\ge 11$ allelic types, the single gene representation compares favourably to the allele-pair representation.

Although inefficient, the single gene network representation may be preferable for other reasons; for example in cases where the two genes might be selected from different populations, if sensitivity to uncertainty or population structure should be investigated as in \cite{green:mortera:09}, or if there is additional complexity involving family relations etc.\ as in \cite{mortera:etal:03}. 
\section{DNA mixture analysis}
\label{sec:mixtureanal}

The analysis of a mixed trace  can have different objectives  depending on the context. The objective can be a quantification of the strength of
\emph{evidence} for a given hypothesis over another, or the objective may be a \emph{deconvolution} of the trace, i.e.\  that one wishes to predict 
genotypes of unknown contributors. 

As a generic example we consider a trace \emph{MC15} from  \cite{Gill200891}, also analysed in \cite{cowell:etal:13}. The trace is believed to contain DNA from at least three contributors, and the victim, who we shall denote $K_1$, is assumed present along with another contributor $K_2$. We shall here deal with the question of the identity of the third contributor. 
 The peak heights from one marker are given in Table~\ref{tab:mc15dat} along with the allele-counts for each of three genotyped individuals.

The available \emph{evidence} $E$  consists of the peak heights as observed in the EPG as well as the genotypes of individuals associated with the case. It is customary to assume relevant population gene frequencies to be known. 

\begin{table}[htb]
	\caption{Peak heights for marker D2S1338 above threshold in trace MC15, and genotypes of associated individuals.} 
	  \centering\small
\begin{tabular}{lrrrrr}
 Allele & Peak height & \multicolumn{3}{c}{Allele-count}\\
$a$ & $Z_a$ & $K_1$& $K_2$ & $K_3$\\
 \hline
   16 & 64 & 0 &0 & 1   \\
       17 &  96 & 0 &0 & 1      \\
      23  & 507 & 1 & 0&0 \\
       24 &  524 & 1 &2 & 0\\
     \hline
\end{tabular}
\label{tab:mc15dat}
\end{table}

\paragraph{Strength of evidence.}

We now consider two competing explanations to the trace.

The \emph{prosecution hypothesis} $H_p: K_1\&K_2\&K_3$ claims that the trace has exactly three contributors who are identical to the three known individuals $K_1$, $K_2$, and $K_3$.

An alternative explanation of the trace is the \emph{defence hypothesis} $H_d: K_1\&K_2\&U$ that the trace contains the DNA of $K_1$, $K_2$, as well as that of an unknown and unrelated individual $U$, whereas $K_3$ has not contributed. 

The strength of the evidence is reported as a \emph{likelihood ratio}:
\[LR={L(\hat H_p)}/{L(\hat H_d)}={\Pr( E \cd \hat H_p)}/{\Pr(E \cd \hat H_d)}\]
where $\hat H_i$ indicates that we use the maximum likelihood estimates of the parameters under the hypothesis $H_i$, see Table~\ref{tab:mle15} below.

\paragraph{Deconvolution.} Under the defence hypothesis we are interested in determining the identity of the unknown contributor $U$. This could for example be done by finding the most probable genotypes for $U$ given the evidence, i.e.\ those with the  highest values of $\Pr( U \cd \hat H_d, E)$. We shall return to this issue in Section~\ref{sec:map} below.

\paragraph{Estimation.}
In order to calculate the relevant quantities for any of the above questions, we need to estimate the unknown parameters of the model. Being able to evaluate the likelihood function, this can be done by numerical maximisation. The maximum likelihood estimates and standard errors obtained under the defence hypothesis $H_d$ and prosecution
hypothesis $H_p$ are given in Table \ref{tab:mle15}. The resulting likelihood ratio is $\log_{10}( LR)= 12.12$. 
\begin{table}[htb]
\caption{Maximum likelihood estimates based on MC15.}
\centering \small
\begin{tabular}{lr|lr}
\multicolumn{2}{c|}{Defence hypothesis}&\multicolumn{2}{c}{Prosecution hypothesis}\\
Parameter&Estimate &Parameter&Estimate\\
\hline
$\rho$   &       26.95            & $\rho$ &    33.86 \\
$\eta $ &      33.86    &      $\eta$   &   26.94  \\
$\xi$      &  0.086       &    $\xi$      &  0.076  \\
$\phi_{K_1}$ &  0.823   &         $\phi_{K_1}$ & 0.825 \\
$\phi_{K_2}$ & 0.055   &         $\phi_{K_2}$& 0.049  \\
  $\phi_{U}$&  0.122       &     $\phi_{K_3}$ & 0.126  \\
\hline
$\log_{10}L(\hat H)$ &{-130.21}&$\log_{10}L(\hat H)$& {-118.09}\\
\hline
\end{tabular}
\label{tab:mle15}
\end{table}

\subsection{Model Diagnostics}\label{sec:diagnostics}

In the assessment of forensic evidence, little attention has been devoted to demonstrate the adequacy of a proposed model used to analyse a specific case or, of equal importance, to assert that data have been correctly recorded for the analysis. This may partly be due to the unavailability of useful methods for the purpose. However, we believe
this aspect to be of utmost importance; in particular we find it reasonable that one should not only compare the prosecution and defence hypothesis, but there should also be an effort to demonstrate that neither hypothesis represents an implausible explanation of the trace under analysis. 

Previously we have introduced auxiliary variables $O_a$, to enable simple computation of the likelihood function  \eqref{eq:likfunction} and representation of evidence from observed peak heights \eqref{eq:evidence}. We shall in the following introduce further auxiliary variables such as binary variables $D_a$ which indicate whether or not a peak was observed for allele $a$, and variables $Q_a$ which indicate whether a peak observed at allele $a$ was less than a specified value. Both of these types of auxiliary variables shall prove to be useful for model validation; in addition, the variables $D_a$ can be used in an analysis which refrains from exploiting the peak heights but is based only on peak presence; see Section~\ref{sec:peakpresence} below.

\subsubsection{Assessing peak height distributions}

First, we wish to investigate whether our model appropriately predicts the observed peak heights. 
Given $Z_a \ge C$, the
peak height follows a continuous distribution and thus the probability transform 
$\Pro(Z_a \le z_a \cd Z_a \ge C)$  follows a uniform distribution.

To express the probability in a way suitable for computation with auxiliary variables we first note that for $z \ge C$ we have
\[
\Pro(Z_a \le z \cd Z_a \ge C) = \frac{\Pro(Z_a \le z) - \Pro(Z_a < C)}{\Pro(Z_a \ge C)}.
\]
Thus all we need to evaluate is the distribution
function in the observed value $z_a$ and at the threshold
$C$. The distribution function
\begin{equation}\label{eq:distfun}
\Pro(Z_a \le z) = \Exp \Big\{\Pro(Z_a \le z \cd \bn_a, \bn_{a+1})\Big\}
\end{equation}
is the expectation of a trivial product of one factor, and to compute this we add
an auxiliary variable $Q_a$ with the same parents as for $O_a$ and with
conditional probability
\[
\Pro(Q_a = 1 \cd \bn_a, \bn_{a+1}) = \Pro(Z_a \le z \cd \bn_a, \bn_{a+1}).
\]
Similarly, we add a binary variable $D_a$ allowing the evaluation of both $\Pro(Z_a \ge C)$ and
$\Pro(Z_a < C)$. 

It can be of interest to consider the distribution of the peak height in
the light of other observed peaks, and not just the marginal
distribution of the peak itself. For instance, we can condition on the
peak heights of all other alleles to get $\Pro\cdp{Z_a \le z}{Z_b =
  z_b, b \neq a, Z_a \ge C}$, or we could include this information for
only the preceding alleles in the ordering to get $\Pro\cdp{Z_a \le
  z}{Z_b = z_b, b \le a, Z_a \ge C}$.  These distributions can all be obtained simply
through conditioning on relevant variables $O_a$ as described in
Section~\ref{sec:posterior}. 

In Figure~\ref{fig:qqplots}, quantile-quantile plots for the conditional distribution of a peak height given observed peak heights  for all other alleles are shown for $H_p$ and $H_d$ using trace MC15 and the associated maximum likelihood estimates in Table~\ref{tab:mle15}.
\begin{figure}[htb]
  \centering
  \includegraphics{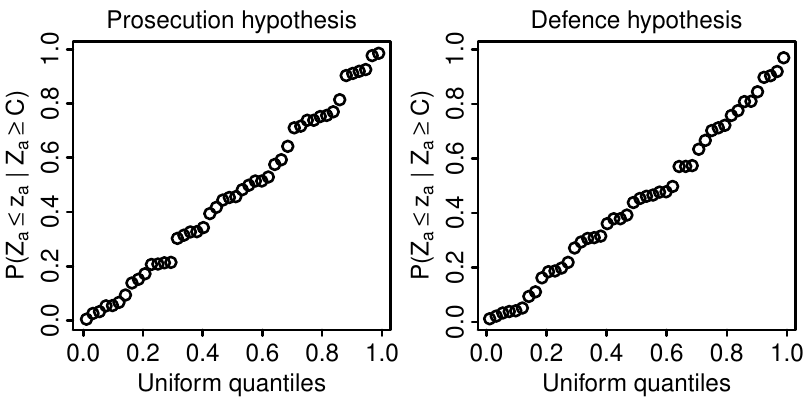}
  \caption{Quantile-quantile plots for the prosecution and defence hypotheses for MC15.}
  \label{fig:qqplots}
\end{figure}

We note that in both diagrams the points are close to the identity line and there is no indication that the peak height distributions are inadequately modelled  under either of the hypotheses.

We can also take a closer look at the distribution of the peak height at any single allele, for example to identify outlying observations. This is illustrated in Figure~\ref{fig:boxplots}. Boxes indicate quartiles  and whiskers indicate 0.5\% and 99.5\% prediction limits for the conditional distributions of peak heights $\Pro\cdp{Z_a \le z}{Z_b =
  z_b, b \neq a, Z_a \ge C}$. The quantiles are found by numerical inversion of the distribution function \eqref{eq:distfun}.
  \begin{figure}[htb]
  \centering
 \includegraphics{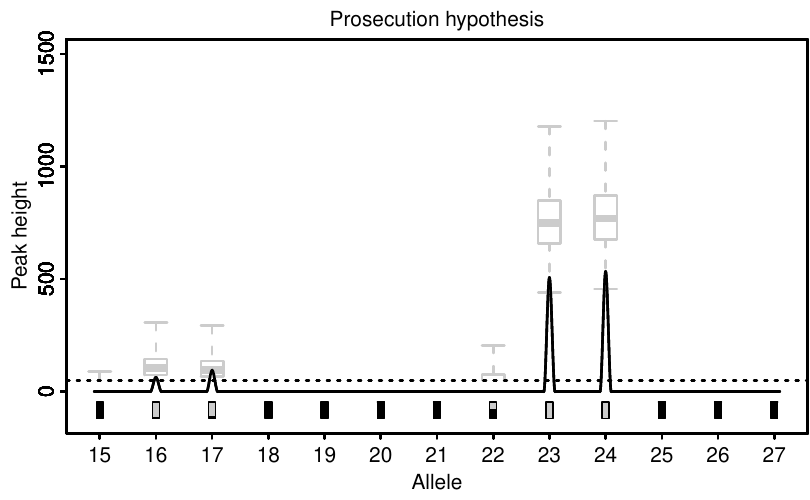}
 \includegraphics{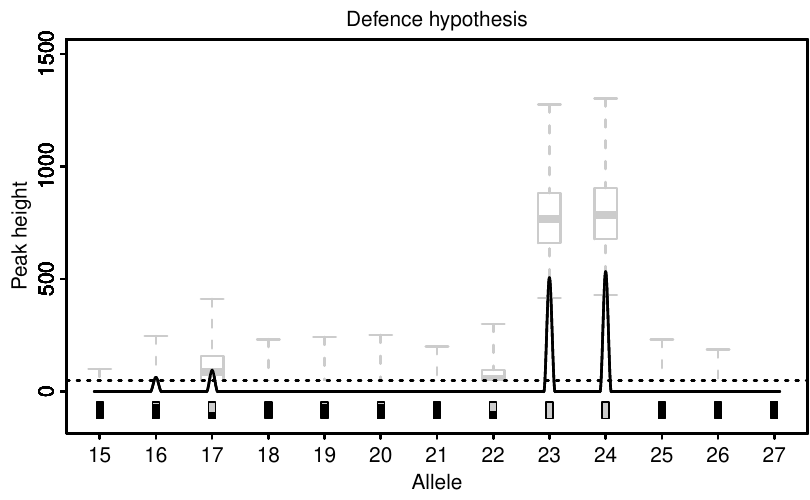}
 \caption{Comparison of observed peak heights to their predictive distribution conditionally on all other observed peak
   heights for marker D2S1338.  The bar below each peak indicates the
   probabilities of observing (grey) and not observing (black) a peak
   at this allele.}
  \label{fig:boxplots}
\end{figure}

 We note that although the observed peak heights at alleles 23 and 24 are somewhat lower than expected, there are no observations that are clear outliers, conforming with the quantile-quantile plots in Figure~\ref{fig:qqplots}. Note that the prosecution hypothesis predicts complete absence of peaks at alleles 18--21 and 25-27, whereas this is not the case for the defence hypothesis involving alleles from unknown contributors; hence under this hypothesis peaks are \emph{a priori} possible at any allele.

\subsubsection{Prequential monitoring of peak presence}

Next, we wish to investigate whether our model correctly predicts
absence and presence of peaks in the EPG. We use the prequential theory of \cite{dawid:84} with so-called
prequential
monitors \citep{article:forecastvalidity}.

Using some arbitrary ordering, we consider the set of alleles across all
markers and the probability that a peak has been seen for allele $a$
given the peak heights observed on all preceding alleles,
\[
p_a = \Pro\cdp{Z_a \ge C}{z_i, i < a}=\Pro\cdp{D_a =1}{z_i, i < a}
\]
which can be obtained by propagation as described in Section~\ref{sec:posterior}.
For each allele $a$, we then consider the logarithmic score
\[
Y_a = 
\begin{cases}
  -\log p_a, & \mbox{if } z_a \ge C\\
  -\log (1-p_a), & \mbox{if } z_a < C
\end{cases}
\]
so that $Y_a$ is always non-negative and higher values of $Y_a$ represent a large penalty for assigning a small probability ($p_a$ or $1-p_a$) to the event that actually happens. 

The cumulative logarithmic score, adjusted for incremental expectations,
\[
M_a = \sum_{i = 1}^a \left\{Y_i - \Exp\cdp*{Y_i}{Z_b, b<i} \right\}
\]
is a martingale with respect to the sequence of peak heights. 

As
$\Var\cdp*{M_a - M_{a-1}}{Z_b, b<a} = \Var\cdp*{Y_a}{Z_b, b<a}$,
the distribution of the normalised cumulative score
\[
\frac{\sum_{i=1}^a Y_i - \sum_{i=1}^a\Exp\cdp*{Y_i}{Z_b, b<i}}{\sqrt{\sum_{i = 1}^a \Var\cdp*{Y_i}{Z_b, b<i}}}
\]
approaches a standard normal distribution as the denominator becomes infinitely large \citep{article:forecastvalidity}. Thus for $q_{1-\alpha}$ being the $1-\alpha$ quantile of the standard normal distribution,
\[
q_{1-\alpha}\;\sqrt{\sum_{i = 1}^a \Var\cdp*{Y_i}{Z_b, b<i}}
\]
is an approximate pointwise $1-\alpha$ upper predictive limit for the cumulative score at allele $a$. 

The cumulative score can easily be calculated using that if $p_a\in \{0,1\}$ we have $Y_a=0$ and otherwise
\begin{align*}
  \Exp\cdp*{Y_a}{Z_b, b < a} &= -p_a\log p_a - (1-p_a)\log(1-p_a),\\
  \Var\cdp*{Y_a}{Z_b, b < a} &= p_a(1-p_a)\left\{\log p_a- \log(1-p_a)\right\}^2.
\end{align*}

Prequential monitor plots of the  prosecution and defence hypothesis for MC15 are displayed in Figure~\ref{fig:preqplots}. 
\begin{figure}[htb]
  \centering
  \includegraphics{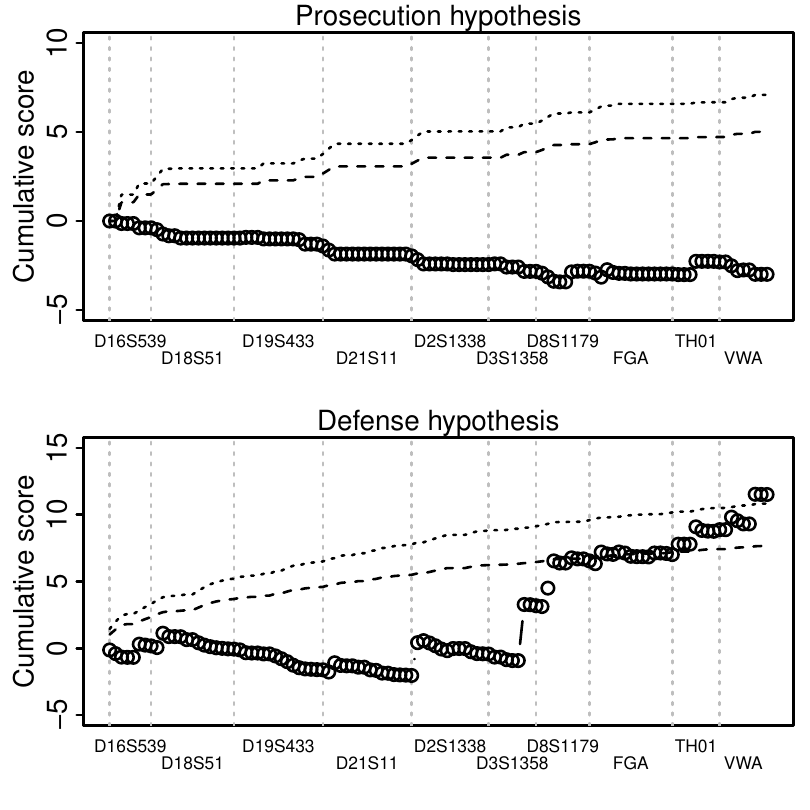}
  \caption{Prequential monitor plots of the prosecution and defence hypotheses for MC15. The dashed horisontal lines indicate upper 95\% and 99\% pointwise predictive limits based on the approximating normal distribution.}
  \label{fig:preqplots}
\end{figure}

A negative jump in the score means that we have observed what the
model predicts as most likely, whereas a positive jump means that we
have observed the opposite of what is most likely according to the
model. If it is equally likely for a peak to fall above and below the
threshold, or there is only one possible outcome --- i.e.\ if $p_a\in
\{0, 1/2,1\}$ --- there is no jump. The size of an upward jump
indicates the level of disagreement between model and observations.
Note that for the defence hypothesis, the
monitors cross the upper limits towards the end of the plot,
indicating that this hypothesis may not adequately describe the
pattern of observed peaks. Further investigation may reveal whether upward jumps are due to observation of rare alleles or, for example, due to recording errors in the data.

\subsection{Simulation}
\label{sec:simulation}

As noted in Section~\ref{sec:posterior}, introducing evidence on the
auxiliary variables $O_a$ yields a representation of the posterior
distribution of the genotypes of the unknown contributors. This in
turn enables simulation of a full DNA trace including peak heights, either 
marginally or conditionally on relevant subsets of the observed peak
heights. More generally, we have for any event $B$ that
\[f_\psi\cdp*{\{z_a\}_{a\in A},\bn}{B} =f_\psi\cdp*{\{z_a\}_{a\in A}}{\bn,B}p\cdp*{\bn}{B}.\] 
If conditioning with $B$ can be
represented by propagation in our Bayesian network, for example if $B=
\{Z_b = z_b, b \neq a\}$, we can easily simulate from $p\cdp*{\bn}{B}$
by standard methods \citep[Section 6.4.3]{cowell:etal:99}. Thus to sample a
full DNA trace, we just further need a method for sampling from
$f_\psi\cdp*{\{z_a\}_{a\in A}}{\bn,B}$.
  
This method of simulation can for example be used in a bootstrap analysis of the estimation uncertainty as in \cite{graversen:lauritzen:13}. Simulation could also be relevant for assessing the discriminatory ability of the calculated likelihood ratio, for illustration of peak height variability, and other forms of model validation. Below we are exploiting simulation in the prediction of profiles of unknown contributors.

\subsection{Prediction of unknown profiles}
\label{sec:map}

In a model involving unknown contributors it can be relevant to
investigate the distribution of  genotypes for each of these conditionally on the evidence. Focusing on a single or few alleles, we can explore 
  this distribution directly. For any combination of genotypes we can compute its probability exactly by probability propagation. We can 
identify those of highest probability by sampling 
genotypes until a proportion $p$ of
the probability mass has been visited as then each of the remaining
combinations of genotypes must have probability at most
$1-p$. Thus the $r$  combinations with probability
strictly greater than $1-p$ must be among those sampled. They can then be ranked according to their
probability and constitute the list of the $r$ most probable
combinations. Here the number $r$ depends on the probability $p$
chosen.

Considering the defence hypothesis of trace MC15, we would like to identify the genotype of the unknown contributor $U$. If we consider the full genotype, at all markers, we often get a very diffuse distribution  as for example reported in \cite{cowell:etal:13}.

One reason for this is that, due to dropout, there are generally many unseen alleles that could be present in the mixture without giving rise to a peak. 
However, if we focus on  explaining the peaks actually seen in the EPG we get a more concentrated distribution, as displayed in  Table~\ref{tab:map}, where the total probability of the six combinations add up to one. 
\begin{table}[htb]
  \centering
  \caption{Probabilities of genotype at marker D2S1338 for the unknown contributor $U$ under the defence hypothesis. The defendant $K_3$ has genotype (16,17).}
  \label{tab:map}
  \begin{tabular}{cccccc}
    16 & 17 & 23 & 24 & D & Prob \\ 
    \hline
    1 &     1 &     0 &     0 &     0 & 0.5276 \\ 
    0 &     1 &     0 &     0 &     1 & 0.1861 \\ 
    0 &     2 &     0 &     0 &     0 & 0.1697 \\ 
    0 &     1 &     0 &     1 &     0 & 0.0640 \\ 
    0 &     1 &     1 &     0 &     0 & 0.0509 \\ 
    1 &     0 &     0 &     0 &     1 & 0.0017 \\ 
    \hline
    \multicolumn{5}{l}{Total probability} & 1.0000 \\ 
    \hline
  \end{tabular} 
\end{table}
As the table shows, the probability that the unknown contributor has at least one allele 17 is .9983, close to certainty. There is some uncertainty concerning the second allele which can be virtually anything although it is by far most probable that
the  genotype is (16, 17); this genotype is that of the defendant $K_3$. The second most probable explanation of the trace is that the other allele has dropped out.

\subsection{Strength of evidence when ignoring peak heights}
\label{sec:peakpresence}

Another potential application of the auxiliary variables is to calculate a likelihood ratio which only uses information about peak presence or absence. This can be done by specifying evidence for the nodes $D_a$ introduced in Section~\ref{sec:diagnostics} rather than for  nodes $O_a$.

It is still necessary to specify a set of model parameters, which for example could be estimated using peak heights. Using the estimates in Table~\ref{tab:mle15} we obtain a likelihood ratio of $\log_{10} LR = 9.85$ which is weaker than the evidence obtained with full peak height information but it is still incriminating for the defendant. Such an analysis is analogous to the one used in \texttt{likeLTD} as suggested by \cite{balding:13}, where peak heights are used only to classify peaks as present, absent, or uncertain.

We have used peak heights to estimate the parameters of the model. In principle parameters could also be estimated solely on the peak presence information, possibly in combination with prior information on some of these, although such estimates would be ill-determined and therefore not useful. 

\subsection{Multiple mixed traces}

By adding more auxiliary variables to the model, we can easily extend the
model to handle multiple traces, either with independent unknown
contributors or where some or all unknown contributors coincide.

We assume that the peak heights across mixed traces are
conditionally independent given the genotypes of common contributors. Peak
height distributions are allowed to vary across traces through the model
parameters.

The network now models the set of all unknown contributors to the mixed traces.
Denote by $\phi^j_i$ the proportion of DNA that contributor $i$ has
made to trace $j$. Then $\phi^j_i = 0$ corresponds to contributor
$i$ not being present in trace $j$.  Therefore, the case where some
or all contributors are distinct to a particular mixed trace is a
sub-model corresponding to $\phi^j_i = 0$ for some $(i,j)$.

An advantage of this specification of the joint model is that we do not need to make
assumptions about possible common unknown contributors to the traces, but we can
let the maximisation of the likelihood point to the relevant
scenario. This has been used in \cite{cowell:etal:13} for a combined analysis of MC15 with another trace pertaining to the same case.

In the case where the
traces have completely independent unknown contributors, it is
recommendable to represent each trace as a separate network to limit the
number of unknown contributors in each network.

\section{Discussion}
We note that our computational methods are exact throughout under the model adopted, and that the only approximations relate to the model representing an inevitable approximation to reality, and possible imprecision of numerical methods.  Nevertheless, using the efficient junction tree representations and exact compression methods as described in Section~\ref{sec:complexity}, we are able to handle more contributors than what has previously been possible. 

We have far from exhausted the flexibility and the potential of the Bayesian network model and point out that simple modifications or elaborations of the basic network can readily be used to, say, incorporate the presence of silent alleles simply by including an extra allele in the genotype representation, or to enable the direct computation of the probability that a specific peak is due to stutter or an absent peak is due to random dropout or allele absence; see \cite{cowell:etal:13} for this and further examples.

\end{document}